\documentclass{llncs}
\usepackage[utf8]{inputenc}
\usepackage{graphicx}
\usepackage{amsmath}
\usepackage{amssymb}
\usepackage{amsfonts}

\newcommand{\ind}[1]{\textbf{1}_{\left \{ #1 \right \}}}
\newtheorem{propn}{Proposition}

\DeclareMathOperator*{\argmin}{\arg\!\min}

\title{Anti-Coordination Games and Stable \\Graph Colorings}

\begin{document}
\author{Jeremy Kun \and Brian Powers \and Lev Reyzin}

\institute{Department of Mathematics, Statistics, and Computer Science\\
University of Illinois at Chicago\\
\texttt{\{jkun2,bpower6,lreyzin\}@math.uic.edu}}

\maketitle

\begin{abstract}
Motivated by understanding non-strict and strict pure strategy equilibria in
network anti-coordination games, we define notions of stable and, respectively,
strictly stable colorings in graphs.  We characterize the cases when such
colorings exist and when the decision problem is NP-hard. These correspond to
finding pure strategy equilibria in the anti-coordination games, whose price of
anarchy we also analyze.  We further consider the directed case, a
generalization that captures both coordination and anti-coordination. We prove
the decision problem for non-strict equilibria in directed graphs is NP-hard.
Our notions also have multiple connections to other combinatorial questions, and
our results resolve some open problems in these areas, most notably the
complexity of the strictly unfriendly partition problem.
\end{abstract}

\section{Introduction}

Anti-coordination games form some of the basic payoff structures in game theory.
Such games are ubiquitous; miners deciding which land to drill for resources,
company employees trying to learn diverse skills, and airplanes selecting flight
paths all need to mutually anti-coordinate their strategies in order to maximize
their profits or even avoid catastrophe.

Two-player anti-coordination is simple and well understood.  In its barest form,
the players have two actions, and payoffs are symmetric for the players, paying
off $1$ if the players choose different actions and $0$ otherwise.  This game
has two strict pure-strategy equilibria, paying off $1$ to each player, as well
as a non-strict mixed-strategy equilibrium paying off an expected $1/2$ to each
player.

In the real world, however, coordination and anti-coordination games are more
complex than the simple two-player game.  People, companies, and even countries
play such multi-party games simultaneously with one another. One straightforward
way to model this is with a graph, whose vertices correspond to agents and whose
edges capture their pairwise interactions.  A vertex then chooses one of $k$
strategies, trying to anti-coordinate with all its neighbors simultaneously.
The payoff of a vertex is the sum of the payoffs of its games with its neighbors
-- namely the number of neighbors with which it has successfully
anti-coordinated.  It is easy to see that this model naturally captures many
applications.  For example countries may choose commodities to produce, and
their value will depend on how many trading partners do not produce that
commodity.

In this paper we focus on finding \text{pure strategies} in equilibrium, as well
as their associated social welfare and price of anarchy, concepts we shall
presently define.  We look at both strict and non-strict  pure strategy
equilibria, as well as games on directed and undirected graphs.  Directed graphs
characterize the case where only one of the vertices is trying to
anti-coordinate with another.  The directed case turns out to not only
generalize the symmetric undirected case, but also captures coordination in
addition to anti-coordination.

These problems also have nice interpretations as certain natural graph coloring
and partition problems, variants of which have been extensively studied.  For
instance, a pure strategy equilibrium in an undirected graph corresponds to what
we call a stable $k$-coloring of the graph, in which no vertex can have fewer
neighbors of any color different than its own.  For $k=2$ colors this is
equivalent to the well-studied \emph{unfriendly partition} or
\emph{co-satisfactory partition} problem.  The strict equilibrium version of
this problem (which corresponds to what we call a strictly stable $k$-coloring)
generalizes the \emph{strictly unfriendly partition problem}. We establish both
the NP-hardness of the decision problem for strictly unfriendly partitions and
NP-hardness for higher $k$.

\subsection{Previous work}
In an early work on what can be seen as a coloring game, Naor and
Stockmeyer~\cite{NaorS93} define a \emph{weak $k$-coloring} of a graph to be one
in which each vertex has a differently colored neighbor.  They give a locally
distributed algorithm that, under certain conditions, weakly $2$-colors a graph
in constant time. 

Then, in an influential experimental study of anti-coordination in networks,
Kearns~et~al.~\cite{KearnsSM06} propose a true graph coloring game, in which
each participant controlled the color of a vertex, with the goal of coloring a
graph in a distributed fashion.  The players receive a reward only when a proper
coloring of the graph is found.  The theoretical properties of this game are
further studied by Chaudhuri~et~al.~\cite{ChaudhuriGJ08} who prove that in a
graph of maximum degree $d$, if players have $d + 2$ colors available they will
w.h.p.\ converge to a proper coloring rapidly using a greedy local algorithm.
Our work is also largely motivated by the work of Kearns~et~al., but for a
somewhat relaxed version of proper coloring.

Bramoull\'{e}~et~al.~\cite{BramoulleLGV04} also study a general
anti-coordination game played on networks.  In their formulation, vertices can
choose to form links, and the payoffs of two anti-coordinated strategies may not
be identical.  They go on to characterize the strict equilibria of such games,
as well as the effect of network structure on the behavior of individual agents.
We, on the other hand, consider an arbitrary number of strategies but with a
simpler payoff structure.

The game we study is related to the MAX-$k$-CUT game, in which each player
(vertex) chooses its place in a partition so as to maximize the number of
neighbors in other partitions. Hoefer~\cite{Hoefer2007}, Monnot \&
Gourv\`es~\cite{G09}, research Nash equlibria and coalitions in this context.
Our Propositions~\ref{propn:alg} and~\ref{obs:poa} generalize known facts proved
there, and we include them for completeness.

This paper also has a strong relationship to \emph{unfriendly partitions} in
graph theory.  An unfriendly partition of a graph is one in which each vertex
has at least as many neighbors in other partitions as in its own.  This topic
has been extensively studied, especially in the combinatorics
community~\cite{AharoniMP90,BruhnDGS10,CowanE,ShelahM90}.  While locally finite
graphs admit $2$-unfriendly partitions, uncountable graphs may
not~\cite{ShelahM90}.

Friendly (the natural counterpart) and unfriendly partitions are also studied
under the names \emph{max satisfactory} and \emph{min co-satisfactory
partitions} by Bazgan~et~al.~\cite{BazganTV10}, who focus on partitions of size
greater than $2$.  They characterize the complexity of determining whether a
graph has a $k$-friendly partition and asked about characterizing $k$-unfriendly
partitions for $k > 2$.  Our notion of stable colorings captures unfriendly
partitions, and we also solve the $k>2$ case.

A natural strengthening of the notion above yields \emph{strictly unfriendly
partitions},  defined by Shafique and Dutton~\cite{ShafiqueD09}.  A strictly
unfriendly partition requires each vertex to have strictly more neighbors
outside its partition than inside it.  Shafique and Dutton characterize a weaker
notion, called \emph{alliance-free partition}, but leave characterizing strictly
unfriendly partitions open.  Our notion of strictly stable coloring captures
strictly unfriendly partitions, giving some of the first results on this
problem.  Cao and Yang~\cite{CaoY12a} also study a related problem originating
from sociology, called the \emph{matching pennies game}, where some vertices try
to coordinate and others try to anti-coordinate.  They prove that deciding
whether such a game has a pure strategy equilibrium is NP-Hard.  Our work on the
directed case generalizes their notion (which they suggested for future work).
Among our results we give a simpler proof of their hardness result for $k=2$ and
also tackle  $k >2$, settling one of their open questions.

There are a few related games on graphs that involve coloring, but they instead
focus on finding good proper colorings. In~\cite{PS08} Panagopoulou and Spirakis
define a coloring game in which the payoff for a vertex is either zero if it
shares a color with a neighbor, and otherwise the number of vertices in the
graph with which it shares a color. They prove pure Nash equilibria always exist
and can be efficiently computed, and provide nice bounds on the number of colors
used. Chatzigiannakis, et al.~\cite{CKPS10} extend this line of work by
analyzing distributed algorithms for this game, and Escoffier, et
al.~\cite{EGM12} improve their bounds.

\subsection{Results}

We provide proofs of the following, the last two being our main results.

\begin{enumerate}

\item \emph{For all $k \ge 2$, every undirected graph has a stable
$k$-coloring, and such a coloring can be found in polynomial time.}  \\ Our
notion of stable $k$-colorings is a strengthening of the notion of
$k$-unfriendly partitions of Bazgan~et~al.~\cite{BazganTV10}, solving their
open problem number 15.

\item \emph{For undirected graphs, the price of anarchy for stable $k$-colorings is
bounded by $\frac{k}{k-1}$, and this bound is tight.}

\item \emph{In undirected graphs, for all $k \ge 2$, determining whether a
graph has a strictly stable $k$-coloring is NP-hard.}  \\ For $k=2$, this
notion is equivalent to the notion that is defined by Shafique and
Dutton~\cite{ShafiqueD09}, but left unsolved.

\item \emph{For all $k \ge 2$, determining whether a directed graph has even a
non-strictly stable $k$-coloring is NP-hard.}\\  Because directed graphs also
capture coordination, this solves two open problems of Cao and
Yang~\cite{CaoY12a}, namely generalizing the coin matching game to more than
two strategies and considering the directed case. 

\end{enumerate}

\section{Preliminaries}

For an unweighted undirected graph $G=(V,E)$, let $C = \{f | f: V
\to \{1, \ldots ,k \}\}.$ We call a function $c \in C$ a \textbf{coloring}.

We study the following anti-coordination game played on a graph $G=(V,E)$.  In
the game, all vertices simultaneously choose a color, which induces a coloring
$c \in C$ of the graph.  In a given coloring $c$, an agent $v$'s
\textbf{payoff}, $\mu_c(v)$, is the number of neighbors choosing colors
different from $v$'s, namely 
\[ 
   \mu_c(v) := \sum_{\{v,w\} \in E} \ind{c(v) \neq c(w)}.  
\] 
Note that in this game higher degree vertices have higher potential payoffs.

We also have a natural generalization to directed graphs.  That is, if $G =
(V,E)$ is a directed graph and $c$ is a coloring of $V$, we can define the
payoff $\mu_c(v)$ of a vertex $v \in V$ analogously as the sum over outgoing
edges:
\[ 
	\mu_c(v) := \sum_{(v,w) \in E} \ind{c(v) \neq c(w)}
\]
Here a directed edge from $v$ to $w$ is interpreted as ``$v$ cares about $w$.''
We can then define the social welfare and price of anarchy for directed graphs
identically using this payoff function. 

Given a graph $G$, we define the \textbf{social welfare} of a coloring $c$ to
be
\[
   W(G,c) := \sum_{v \in V} \mu_c(v).
\]
We say a coloring $c$ is \textbf{stable}, or in {equilibrium}, if no vertex can
improve its payoff by changing its color from $c(v)$ to another color. We define
$Q$ to be the set of stable colorings.

We call a coloring function $c$ \textbf{strictly stable}, or in {strict
equilibrium}, if every vertex would decrease its payoff by changing its color
from $c(v)$ to another color. If a coloring function is stable and at least one
vertex can change its color without decreasing its payoff, then the coloring is
\textbf{non-strict}.

We define the \textbf{price of anarchy} for a graph $G$ to be
\[
\mbox{PoA}(G) := \frac{\max_{c' \in C}W(G,c')}
{\min_{c \in Q}W(G,c)}.
\]
This concept was originally introduced by Koutsoupias and Papadimitriou
in~\cite{KP99}, where they consider the ratio of social payoffs in the best and
worst-case Nash equilibria. Much work has since focused on the price of anarchy,
e.g.~\cite{FKKMS02,RT02}.\\

\noindent \textbf{Mixed and pure strategies}\ \
It is natural to consider both pure and mixed strategies for the players in our
network anti-coordination game.  A pure strategy solution does not in general
exist for every 2 player game, while a mixed strategy solution will.  However,
in this coloring game not only will a pure strategy solution always exist, but
for any mixed strategy solution there is a pure strategy equilibrium solution
which achieves a social welfare at least as good, and where each player's payoff
is identical with its expected payoff under the mixed strategy.\\

\noindent \textbf{Strict and non-strict stability}\ \
It is worthwhile to note that a strictly stable coloring $c$ need not provide
the maximum social welfare.  In fact, it is not difficult to construct a graph
for which a strictly stable coloring exists yet the maximum social welfare is
achieved by a non-strictly stable coloring, as shown in
Figure~\ref{fig:weakstrongwelfare}. 
\begin{figure}[t]
\centering
\scalebox{0.35}{\includegraphics{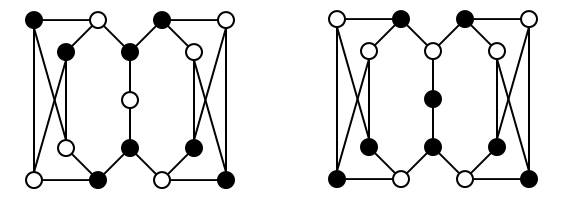}}
\caption{The strictly stable 2-coloring on the left attains a social welfare of
40 while the non-strictly stable coloring on the right attains
42, the maximum for this graph.}
\label{fig:weakstrongwelfare}
\end{figure}

\section{Stable colorings}

First we consider the problem of finding stable colorings in graphs.  For the
case $k=2$, this is equivalent to the solved unfriendly partition problem.  For
this case our algorithm is  equivalent to the well-studied local algorithm for
MAX-CUT~\cite{ElsasserT11,MonienT10}.  Our argument is a variant of a standard
approximation algorithm for MAX-CUT, generalized to work with partitions of size
$k \ge 2$.

\begin{propn}\label{propn:alg}
For all $k \ge 2,$ every finite graph $G=(V,E)$ admits a stable $k$-coloring.
Moreover, a stable $k$-coloring can be found in polynomial time.
\end{propn}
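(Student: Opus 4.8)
The plan is to run the obvious local-improvement dynamics and use the social welfare itself as a potential function. First I would record the basic identity $W(G,c) = 2 \cdot \left| \{ \{u,w\} \in E : c(u) \neq c(w) \} \right|$, which holds because each bichromatic edge contributes $1$ to the payoff $\mu_c$ of each of its two endpoints; in particular $W(G,c)$ is a non-negative even integer bounded above by $2|E|$.

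Next I would consider the dynamics: start from an arbitrary coloring $c_0 \in C$, and as long as the current coloring $c$ is not stable, choose any vertex $v$ that can strictly increase its own payoff $\mu_c(v)$ by switching to some other color, and perform that recoloring to obtain $c'$. The key step is to verify that such a move strictly increases $W$. Recoloring $v$ changes only the indicators $\ind{c(v) \neq c(w)}$ for edges $\{v,w\}$ incident to $v$, and each such indicator is symmetric in its two endpoints, so the net change in $W$ equals exactly twice the change in $\mu_c(v)$; since $\mu_{c'}(v) > \mu_c(v)$ and payoffs are integers, $W(G,c') \geq W(G,c) + 2$. Because $W$ is bounded above by $2|E|$, the dynamics must terminate after at most $|E|$ moves, and by definition a coloring at which no vertex has an improving move is stable. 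This establishes existence.

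For the polynomial-time claim, I would observe that detecting an improving move is cheap: the payoff of $v$ under color $i$ is $\deg(v)$ minus the number of neighbors of $v$ colored $i$, so $v$ can improve if and only if its current color does not have minimum multiplicity among the colors of its neighbors; this check costs $O(\deg(v) + k)$ time, and a single pass over all vertices either produces an improving move or certifies stability in $O(|E| + |V|k)$ time. With at most $|E|$ iterations, the whole procedure runs in time polynomial in the size of $G$ and in $k$. I do not expect a genuine obstacle here; the only point needing care is the factor-of-two bookkeeping that converts a single vertex's unilateral gain into a global increase of the potential, together with the integrality remark that makes the increase at least $2$ and hence forces termination in polynomially many steps rather than merely in the limit.
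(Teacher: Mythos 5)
Your proposal is correct and follows essentially the same route as the paper: the paper uses the number of properly colored edges $\Phi(c) = W(G,c)/2$ as an exact potential, observes that recoloring an unhappy vertex increases $\Phi$ by at least $1$ (your increase of $W$ by at least $2$), and concludes termination after at most $|E|$ improving moves, with the same cheap per-vertex check for stability. The only cosmetic difference is that the paper always moves an unhappy vertex to its best-response color via Equation~\ref{eq:greedy}, while you allow any strictly improving move; both yield the same bound.
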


\begin{proof}
Given a coloring $c$ of a graph, define $\Phi(c)$ to be the number of
properly-colored edges. It is clear that this function is bounded and that
social welfare is $2 \Phi(c)$. Moreover, the change in a vertex's utility by
switching colors is exactly the change in $\Phi$, realizing this as an exact
potential game~\cite{M96}. In a given coloring, we call a vertex $v$
\emph{unhappy} if $v$ has more neighbors of its color than of some other color.
We now run the following process: while any unhappy vertex exists, change its
color to the color
\begin{equation}\label{eq:greedy}
c'(u) = \argmin_{m \in \{1, \ldots, k\}} \sum_{ v \in N(u)}\ind{c(v) = m}.
\end{equation}
As we only modify the colors of unhappy vertices, such an amendment to a
coloring increases the value of $\Phi$ by at least 1. After at most $|E|$ such
modifications, no vertex will be unhappy, which by definition means the coloring
is stable. \hfill $\square$
\end{proof}

We note that because, in the case of $k=2$, maximizing the social welfare of a
stable coloring is equivalent to finding the MAX-CUT of the same graph, which is
known to be NP-hard~\cite{GareyJ79}, we cannot hope to find a global optimum for
the potential function.  However, we can ask about the price of anarchy, for
which we obtain a tight bound.  The following result also appears, using a
different construction, in~\cite{Hoefer2007}, but we include it herein for
completeness.

\begin{propn}\label{obs:poa}
The price of anarchy of the $k$-coloring anti-coordination game is at most
$\frac{k}{k-1}$, and this bound is tight.
\end{propn}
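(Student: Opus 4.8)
The plan is to prove the upper bound $\mathrm{PoA}(G) \le \frac{k}{k-1}$ by showing that any stable coloring already captures a $\frac{k-1}{k}$ fraction of the edges as properly colored, and then to exhibit a family of graphs on which this is achieved exactly. For the upper bound, I would work with the potential $\Phi(c)$ from Proposition~\ref{propn:alg}, recalling that $W(G,c) = 2\Phi(c)$ and that the optimum social welfare is at most $2|E|$ (every edge properly colored). So it suffices to show $\Phi(c) \ge \frac{k-1}{k}|E|$ for every stable coloring $c$.

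The key local estimate: fix a stable coloring $c$ and a vertex $v$ of degree $d(v)$. Since $v$ is not unhappy, for every color $m$ the number of neighbors of $v$ colored $m$ is at least the number colored $c(v)$; in particular the number of neighbors sharing $v$'s color is at most $d(v)/k$ (it is bounded by the minimum over the $k$ color classes of neighbors, hence by the average, which is $d(v)/k$). Therefore the number of properly colored edges at $v$ is at least $\frac{k-1}{k}d(v)$. Summing $\mu_c(v) \ge \frac{k-1}{k}d(v)$ over all $v$ and using $\sum_v d(v) = 2|E|$ and $W(G,c) = \sum_v \mu_c(v)$ gives $W(G,c) \ge \frac{k-1}{k}\cdot 2|E| \ge \frac{k-1}{k}\max_{c'} W(G,c')$, which rearranges to the claimed bound. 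I expect this counting step to be entirely routine; the only mild care needed is the observation that stability at $v$ forces at most a $1/k$ fraction of $v$'s neighbors to collide with it.

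For tightness, I would take $G = K_k$, the complete graph on $k$ vertices, with the stable coloring that assigns every vertex the same color (each vertex has $0$ neighbors of each other color and $k-1$ of its own, so no vertex is unhappy — this is stable, in fact this is the worst case). Its social welfare is $0$, which is unhelpful, so instead I would use a disjoint union of copies of $K_k$ where the "bad" equilibrium colors each clique monochromatically but with a twist: take the coloring of $K_k$ that uses each of the $k$ colors once — that is the optimum, welfare $k(k-1)$ — versus a stable coloring where the vertices split as evenly as possible into fewer classes. More carefully: on $K_k$, color vertices so that the color multiplicities are as balanced as possible is optimal; a monochromatic coloring of $K_k$ is stable with welfare $0$. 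To get a finite nonzero ratio approaching $\frac{k}{k-1}$, I would instead consider $K_{k+1}$ or a blow-up: give $K_k$ two vertices of one color and one of each remaining color — then each of the two collided vertices has one same-colored neighbor and $k-2$ differently colored ones out of $k-1$, and one checks this is stable (no color class is smaller), with welfare $(k-1)k - 2$ against optimum... this does not immediately give the ratio either. The clean construction is a complete $k$-partite-like graph: take $k$ disjoint independent sets and an equilibrium that is monochromatic within a cleverly chosen structure; I would look for the graph where every vertex has exactly $d$ neighbors of its own color and $(k-1)d$ of others in the bad equilibrium, while the optimum colors all edges properly — e.g. take the graph to be a disjoint union of $K_k$'s is wrong, so instead take $G$ to be a single clique $K_n$ with $n$ a multiple of $k$: the balanced coloring (each color $n/k$ times) is the optimum with welfare $n^2(1 - 1/k)$ roughly $\binom{n}{2}\cdot 2 \cdot \frac{k-1}{k}$, and it is also an equilibrium; to find a worse equilibrium, collapse to fewer colors unevenly. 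The main obstacle is pinning down this extremal family so the ratio is exactly $\frac{k}{k-1}$ rather than merely approaching it; I anticipate the right example is a blow-up of $K_k$ (replace each vertex by an independent set of size $t$, join different parts completely), where the monochromatic-per-part coloring yields welfare $k(k-1)t^2$ equal to the optimum — so that fails too, meaning one wants instead a graph where the bad equilibrium is forced, such as the "cocktail-party"-type construction or a union of small gadgets each of which has a unique structure admitting both a perfect coloring and a stable coloring missing a $1/k$ fraction; I would settle this by direct case analysis on a minimal gadget and then take disjoint copies.
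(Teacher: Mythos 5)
Your upper bound argument is correct and is essentially the paper's: stability at a vertex $v$ forces the number of same-colored neighbors to equal $\min_m |\{w \in N(v) : c(w)=m\}|$, which is at most the average $d(v)/k$, so $\mu_c(v) \ge \frac{k-1}{k}d(v)$; summing over $v$ and comparing against the trivial optimum bound $2|E|$ gives $\mbox{PoA}(G) \le \frac{k}{k-1}$. The paper phrases this per-vertex via pigeonhole rather than through the potential $\Phi$, but it is the same computation, and your version is fine.

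The tightness half has a genuine gap: you never produce a working extremal family, and you end by deferring to an unspecified ``minimal gadget.'' Along the way you also assert something false: the monochromatic coloring of $K_k$ is \emph{not} stable, since every vertex then has $k-1$ neighbors of its own color and $0$ of every other color, so it is unhappy and switching raises its payoff from $0$ to $k-1$. The other candidates you try (balanced colorings of cliques, blow-ups of $K_k$) all fail for the opposite reason: their stable colorings are already optimal, so the ratio is $1$. The gadget you were groping for is the paper's: take two disjoint copies of $K_k$ on $v_1,\dots,v_k$ and $v_{k+1},\dots,v_{2k}$, and add the perfect matching $\{v_i,v_{i+k}\}$ for $i \in \{1,\dots,k\}$. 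Coloring both $v_i$ and $v_{i+k}$ with color $i$ gives every vertex exactly one neighbor of each of the $k$ colors (degree $k$, payoff $k-1$), which is stable since no switch can help; its welfare is $\frac{k-1}{k}\cdot 2|E|$. Cyclically shifting the colors on the second copy makes the coloring proper, achieving welfare $2|E|$, so the ratio is exactly $\frac{k}{k-1}$.
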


\begin{proof}
By the pigeonhole principle, each vertex can always achieve a $\frac{k-1}{k}$
fraction of its maximum payoff by choosing its color according to
Equation~\ref{eq:greedy}.  Hence, if some vertex does not achieve this payoff
then the coloring is not stable.  This implies that the price of anarchy is at
most $\frac{k}{k-1}$.

To see that this bound is tight take two copies of $K_k$ on vertices $v_1,
\dots, v_k$ and $v_{k+1}, \dots, v_{2k}$ respectively. Add an edge joining $v_i$
with $v_{i+k}$ for $i\in \{1,\dots,k\}$. If each vertex $v_i$ and $v_{i+k}$ is
given color $i$ this gives a stable $k$-coloring of the graph, as each vertex
has one neighbor of each of the $k$ colors attaining the social welfare lower
bound of $2(\frac{k-1}{k})|E|$. If, however, the vertices $v_{i+k}$ take color
$i+1$ for $i\in\{1,\dots,k-1\}$ and $v_{2k}$ takes color 1, the graph achieves
the maximum social welfare of $2|E|$.  This is illustrated for $k=5$ in
Figure~\ref{fig:k5copies}.
\hfill
$\square$\end{proof}

\begin{figure}[htb]
\centering
\scalebox{0.40}{\includegraphics{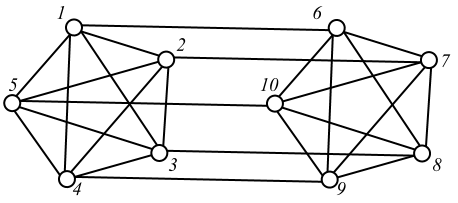}}
\caption{A graph achieving PoA of $\frac{5}{4}$, for k=5}
\label{fig:k5copies}
\end{figure}

\section{Strictly Stable Colorings}

In this section we show that the problem of finding a strictly stable
equilibrium with any fixed number $k \geq 2$ of colors is NP-complete.  We give
NP-hardness reductions first for $k \geq 3$ and then for $k=2$.  The $k=2$ case
is equivalent to the strictly unfriendly $2$-partition
problem~\cite{ShafiqueD09}, whose complexity we settle.

\begin{theorem} 
For all $k \geq 2$, determining whether a graph has a strictly stable
$k$-coloring is NP-complete.  
\end{theorem}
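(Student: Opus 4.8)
The plan is to prove NP-completeness for both $k \geq 3$ and $k = 2$ by polynomial-time reductions from well-known NP-hard problems. Membership in NP is immediate: a coloring is a polynomial-size certificate, and checking strict stability requires, for each vertex $v$, verifying that its own color strictly beats every other color on its neighborhood, which is a polynomial number of counting comparisons. So the entire task is the hardness direction, and I would treat the two regimes separately since the gadgets behave differently.

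**For $k \geq 3$,** I would reduce from graph $k$-colorability (proper $k$-coloring), which is NP-hard for $k \geq 3$. The intuition is that strict stability is a ``local majority-avoidance'' condition, and I want to force it to collapse onto properness. Given an instance $H$, I would build $G$ by taking $H$ and attaching to each vertex a gadget — e.g., pendant cliques or pendant copies of $K_{k-1}$ with prescribed colors, possibly forced via large ``rigid'' substructures — so that in any strictly stable coloring of $G$, each original vertex of $H$ is effectively barred from sharing a color with any $H$-neighbor. Concretely, if a vertex $v$ has $d$ neighbors in $H$ and I hang enough forced-color neighbors on $v$ so that for $v$ to be strict it must have $0$ $H$-neighbors of its own color, then strictly stable colorings of $G$ restrict exactly to proper $k$-colorings of $H$. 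I need to also check the attached gadget vertices are themselves strict, which the clique structure handles. The main design point is arranging the ``forcing'' of gadget colors; one clean way is to note that in a clique $K_k$ every vertex must receive a distinct color in any stable coloring, giving a rigid rainbow that can be wired into the rest of the construction.

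**For $k = 2$,** properness is bipartiteness and is in P, so the above approach fails and I would instead reduce from a genuinely hard problem — NAE-3SAT or a MAX-CUT-style variant, or directly adapt the classical reduction used for related ``unfriendly partition'' hardness. The target condition is: each vertex has \emph{strictly} more neighbors on the opposite side than on its own. I would build variable gadgets whose two stable states encode true/false, clause gadgets that are strictly stable iff the clause is ``not-all-equal'' satisfied, and connector edges (possibly with multiplicity simulated by parallel paths or by bundles of degree-1 attachments) tuned so that the strict inequality at every vertex is exactly equivalent to satisfaction of the formula. The delicate part is that strictness has no ``slack'': a vertex that is merely balanced is not strict, so every gadget vertex's degree and neighbor-distribution must be engineered so the intended assignment gives a strict surplus of at least one, while any cheating assignment creates a tie or deficit somewhere.

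**The hard part** will be the $k = 2$ construction: unlike the $k \geq 3$ case where proper colorability gives a ready-made hard problem aligned with the stability condition, for two colors I must hand-build variable and clause gadgets and verify two things simultaneously — (i) every vertex in a gadget is strict under the intended encoding of a satisfying assignment, and (ii) no unintended coloring is strictly stable, i.e. every non-satisfying or malformed assignment produces some vertex with a tie or a strict deficit. Balancing degrees so that these parities work out, while keeping the graph simple (no multi-edges) and polynomial-size, is the crux; I expect to spend most of the effort drawing the gadget and doing a careful case analysis on each vertex type. I would also remark that this resolves the open question of Shafique and Dutton on strictly unfriendly partitions, since $k=2$ strictly stable colorings are exactly strictly unfriendly $2$-partitions.
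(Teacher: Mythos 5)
Your overall architecture (NP membership is easy; split hardness into $k\ge 3$ via $k$-colorability and $k=2$ via a SAT-type gadget reduction) matches the paper's, but the one concrete step you commit to for $k\ge 3$ does not work. Strict stability of $v$ with color $i$ means: for every other color $m$, $v$ has \emph{strictly more} neighbors colored $m$ than colored $i$ — a strict-minority condition, not an absence condition. If your vertex gadget contributes $n_j$ forced neighbors of color $j$ to $v$, and $a_j$ counts $v$'s neighbors of color $j$ inside the original graph $H$, strictness reads $a_m+n_m>a_i+n_i$ for all $m\ne i$. This can hold with $a_i\ge 1$ (take $a_i=1$ and every $a_m$ large: no fixed, nonnegative $n_j$'s can break the inequality), so improper colorings can be strictly stable; and if you instead try to make the $n_j$'s do real work, the forward direction breaks, since in a proper coloring of $H$ some color may be entirely absent from $v$'s $H$-neighborhood and the required inequality $n_m>n_i$ cannot hold for all choices of $v$'s color simultaneously. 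In short, "zero neighbors of my own color" is not expressible by hanging forced-color neighbors on a vertex. The repair is to place the gadget on each \emph{edge}, not each vertex: complete every original edge $\{u,v\}$ to a $K_k$ by adding a fresh $K_{k-2}$ fully joined to $u$ and $v$. Then a monochromatic edge forces some color to be unused inside that $K_k$, so a gadget vertex has zero neighbors of that color (versus at least zero of its own) and cannot be strict; conversely a proper $k$-coloring of $H$ extends to rainbow colorings of all these $K_k$'s, which are strictly stable (isolated vertices need a separate patch, e.g.\ attaching a $K_{k-1}$, since they are never strict).

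For $k=2$ your plan is the right shape and is essentially what the paper does — it reduces from 3-SAT using a clause gadget whose strictly stable $2$-colorings are exactly those in which some literal pair is monochromatic, degree-$2$ connector vertices that propagate or negate literal values between clause gadgets, and pendant vertices to keep the literal vertices themselves strict. But you supply none of the gadgets or the case analysis, and you correctly identify that this is where all the content lives; as written, this half is an outline rather than a proof.
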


\begin{proof}
This problem is clearly in NP.  We now analyze the hardness in two cases.

\noindent \emph{1)} $k\ge3$:
For this case we reduce from classical $k$-coloring.  Given a graph $G$, we
produce a graph $G'$ as follows.

Start with $G' = G$, and then for each edge $e = \{ u,v \}$ in $G$ add a copy
$H_e$ of $K_{k-2}$ to $G'$ and enough edges s.t.\ the
induced subgraph of $G'$ on $V(H_e) \cup \left \{ u,v \right \}$
is the complete graph on $k$ vertices. Figure~\ref{fig:edgegadget} illustrates
this construction.

\begin{figure}[htb]
\centering
\scalebox{0.4}{\includegraphics{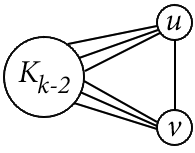}}
\caption{The gadget added for each edge in $G$.}
\label{fig:edgegadget}
\end{figure}

Now supposing that $G$ is $k$-colorable, we construct a strictly stable
equilibrium in $G'$ as follows. Fix any proper $k$-coloring $\varphi$ of $G$.
Color each vertex in $G'$ which came from $G$ (which is not in any $H_e$) using
$\varphi$.  For each edge $e = (u,v)$ we can trivially assign the remaining
$k-2$ colors among the vertices of $H_e$ to put the corresponding copy of $K_k$
in a strict equilibrium. Doing this for every such edge results in a strictly
stable coloring. Indeed, this is a proper $k$-coloring of $G'$ in which every
vertex is adjacent to vertices of all other $k-1$ colors.

Conversely, suppose $G'$ has a strictly stable equilibrium with $k$ colors.
Then no edge $e$ originally coming from $G$ can be monochromatic. If it were,
then there would be $k-1$ remaining colors to assign among the remaining $k-2$
vertices of $H_e$. No matter the choice, some color is unused and any vertex of
$H_e$ could change its color without penalty, contradicting that $G'$ is in a
strict equilibrium.

The only issue is if $G$ originally has an isolated vertex. In this case, $G'$
would have an isolated vertex, and hence will not have a strict equilibrium
because the isolated vertex may switch colors arbitrarily without decreasing its
payoff.  In this case, augment the reduction to attach a copy of $K_{k-1}$ to
the isolated vertex, and the proof remains the same.

\noindent \emph{2)} $k =2$:
We reduce from 3-SAT. Let $\varphi
= C_1 \wedge \dots \wedge C_k$ be a boolean formula in 3-CNF form. We construct
a graph $G$ by piecing together gadgets as follows.

For each clause $C_i$ construct an isomorphic copy of the graph shown in
Figure~\ref{fig:clausegadget}. We call this the \emph{clause gadget} for $C_i$.
In Figure~\ref{fig:clausegadget}, we label certain vertices to show how the
construction corresponds to a clause.  We call the two vertices labeled by the
same literal in a clause gadget a \emph{literal gadget.} In particular,
Figure~\ref{fig:clausegadget} would correspond to the clause $(x \vee y \vee
\bar{z})$, and a literal assumes a value of true when the literal gadget is
monochromatic. Later in the proof we will force literals to be consistent across
all clause gadgets, but presently we focus on the following key property of a
clause gadget.

\begin{figure}[t]
\centering
\scalebox{0.37}{\includegraphics{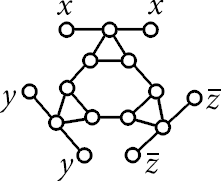}}
\caption{The clause gadget for $(x \vee y \vee \bar{z})$. Each literal
corresponds to a pair of vertices, and a literal being satisfied corresponds
to both vertices having the same color.}
\label{fig:clausegadget}
\end{figure}

\begin{lemma}
\label{lemma:clausegadget}
Any strictly stable 2-coloring of a clause gadget has a monochromatic literal
gadget. Moreover, any coloring of the literal gadgets which includes a
monochromatic literal extends to a strictly stable coloring of the clause
gadget (excluding the literal gadgets).
\end{lemma}

\begin{proof}
The parenthetical note will be resolved later by the high-degree of the vertices
in the literal gadgets. Up to symmetries of the clause gadget (as a graph) and
up to swapping colors, the proof of Lemma~\ref{lemma:clausegadget} is
illustrated in Figure~\ref{fig:clauselemmaproof}. The first five graphs show the
cases where one or more literal gadgets are monochromatic, and the sixth shows
how no strict equilibrium can exist otherwise. Using the labels in
Figure~\ref{fig:clauselemmaproof}, whatever the choice of color for the vertex
$v_1$, its two uncolored neighbors must have the same color (or else $v_1$ is
not in strict equilibrium). Call this color $a$. For $v_2, v_3$, use the same
argument and call the corresponding colors $b, c$, respectively. Since there are
only two colors, one pair of $a,b,c$ must agree.  WLOG suppose $a=b$. But then
the two vertices labeled by $a$ and $b$ which are adjacent are not in strict
equilibrium.  \hfill $\square$ 
\end{proof}

\begin{figure}[h]
\centering
\scalebox{0.5}{\includegraphics{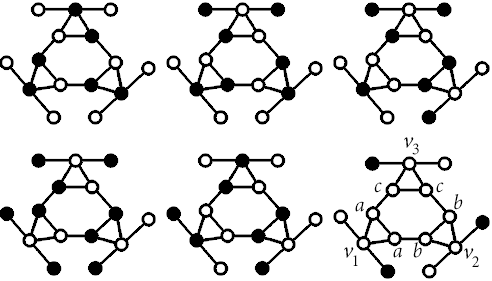}}
\caption{The first five figures show
that a coloring with a monochromatic literal gadget can be extended to a strict
equilibrium. The sixth (bottom right) shows that no strict equilibrium can
exist if all the literals are not monochromatic.}
\label{fig:clauselemmaproof}
\end{figure}

Using Lemma~\ref{lemma:clausegadget}, we complete the proof of the theorem. We
must enforce that any two identical literal gadgets in different clause gadgets
agree (they are both monochromatic or both not monochromatic), and that any
negated literals disagree. We introduce two more simple gadgets for each
purpose.

The first is for literals which must agree across two clause gadgets, and we
call this the \emph{literal persistence gadget}. It is shown in
Figure~\ref{fig:connectiongadgets}. The choice of colors for the literals on one
side determines the choice of colors on the other, provided the coloring is
strictly stable. In particular, this follows from the central connecting vertex
having degree 2. A nearly identical argument applies to the second gadget, which
forces negated literals to assume opposite truth values. We call this the
\emph{literal negation gadget}, and it is shown in
Figure~\ref{fig:connectiongadgets}.  We do not connect all matching literals
pairwise by such gadgets but rather choose one reference literal $x'$ per
variable and connect all literals for $x, \overline{x}$ to $x'$ by the needed
gadget.

\begin{figure}[htb]
\centering
\scalebox{0.45}{\includegraphics{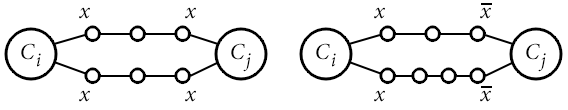}}
\caption{The literal persistence gadget (left) and literal negation gadget
(right) connecting two clause gadgets $C_i$ and $C_j$. The vertices labeled $x$
on the left are part of the clause gadget for $C_i$, and the vertices labeled
$x$ on the right are in the gadget for $C_j$.}
\label{fig:connectiongadgets}
\end{figure}

The reduction is proved in a straightforward way. If $\varphi$ is satisfiable,
then monochromatically color all satisfied literal gadgets in $G$. We can extend
this to a stable 2-coloring: all connection gadgets and unsatisfied literal
gadgets are forced, and by Lemma~\ref{lemma:clausegadget} each clause gadget can
be extended to an equilibrium. By attaching two additional single-degree
vertices to each vertex in a literal gadget, we can ensure that the literal
gadgets themselves are in strict equilibrium and this does not affect any of the
forcing arguments in the rest of the construction.

Conversely, if $G$ has a strictly stable 2-coloring, then each clause gadget has
a monochromatic literal gadget which gives a satisfying assignment of $\varphi$.
All of the gadgets have a constant number of vertices so the construction is
polynomial in the size of $\varphi$. This completes the reduction and proves the
theorem.  \hfill $\square$ 
\end{proof}

\section{Stable colorings in directed graphs}

In this section we turn to directed graphs.  The directed case clearly
generalizes the undirected as each undirected edge can be replaced by two
directed edges.  Moreover, directed graphs can capture coordination. For two
colors, if vertex $u$ wants to coordinate with vertex $v$, then instead of
adding an edge $(u,v)$ we can add a proxy vertex $u'$ and edges $(u,u')$ and
$(u',v)$. To be in equilibrium, the proxy has no choice but to disagree with
$v$, and so $u$ will be more inclined to agree with $v$. For $k$ colors we can
achieve the same effect by adding an undirected copy of $K_{k-1}$, appropriately
orienting the edges, and adding edges $(u,x), (x,v)$ for each $x \in K_{k-1}$.
Hence, this model is quite general.

Unlike in the undirected graph case, a vertex updating its color according to
Equation~\ref{eq:greedy} does not necessarily improve the overall social
welfare. In fact, we cannot guarantee that a pure strategy equilibrium even
exists -- e.g.\ a directed $3$-cycle has no stable 2-coloring, a fact that we
will use in this section.

We now turn to the problem of determining if a directed graph has an equilibrium
with $k$ colors and prove it is NP-hard.  Indeed, for strictly stable colorings
the answer is immediate by reduction from the undirected case. Interestingly
enough, it is also NP-hard for non-strict $k$-colorings for any $k \geq 2$. 

\begin{theorem} 
For all $k \geq 2$, determining whether a directed graph has a
stable $k$-coloring is NP-complete.
\end{theorem}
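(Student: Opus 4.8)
The plan is to establish membership in NP and then prove NP-hardness by reduction from $3$-SAT, handling $k=2$ first (this also recovers, in simpler form, the matching-pennies hardness of Cao and Yang) and then $k \ge 3$. Membership is immediate: a coloring $c : V \to \{1,\dots,k\}$ is a polynomial-size certificate, and stability is checked locally at each vertex by comparing, among its out-neighbors, the count of its own color against the count of every other color. The engine of the hardness reduction is the fact recorded just before the theorem: a directed $3$-cycle has no stable $2$-coloring, since each of its vertices has out-degree $1$ and so, in any stable coloring, is forced to differ from its unique out-neighbor --- impossible around a triangle (more generally, any directed odd cycle with all out-degrees $1$ is uncolorable for $k=2$).

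For $k=2$ I would build three families of gadgets. A \emph{variable gadget} pins, in every stable coloring, a consistent truth value to each variable, encoded by the color of a reference vertex, with the two colors read as ``true'' and ``false''; low-out-degree ``propagation'' vertices --- in the spirit of the persistence and negation connectors in the proof of the previous theorem --- broadcast a variable's value, and its negation, to every clause containing that literal, and keeping those connectors near out-degree $1$ makes the forcing strictly one-directional. A \emph{clause gadget} is built around a directed odd cycle together with, for each of its three literals, a ``bypass'' edge from a cycle vertex to the corresponding literal vertex, arranged so that a satisfied literal hands that cycle vertex a free choice of color (it then sees one out-neighbor of each color), which lets the odd cycle be $2$-colored, whereas a clause all of whose literals are false leaves every cycle vertex rigid, so that clause gadget --- and hence the whole graph --- has no stable coloring. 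One then verifies both directions: a satisfying assignment extends to a global stable coloring by putting every gadget in its ``good'' state, and conversely any stable coloring of the constructed graph induces a consistent assignment that must satisfy every clause, since an unsatisfied clause gadget contains an unstable vertex. Each gadget has a constant number of vertices, so the reduction is polynomial.

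For $k \ge 3$ a triangle is properly $3$-colorable, so the directed $3$-cycle is no longer self-contradictory and a different rigid core is needed. I would either (a) pad every relevant vertex of the $k=2$ construction with out-edges to fixed vertices that are forced to carry the $k-2$ extra colors --- for instance vertices of a directed $(k-1)$-clique, whose only stable colorings are the ``rainbow'' ones --- with enough copies that those extra colors are never locally optimal, so the entire analysis collapses onto the two ``active'' colors and reduces to the $k=2$ case (using that the game is symmetric under permuting colors); or (b) reduce from classical $k$-colorability, replacing each edge $\{u,v\}$ of the input graph by a gadget, again assembled from forced cliques, that admits a stable completion exactly when $c(u) \neq c(v)$ and otherwise has none, so that $G'$ has a stable $k$-coloring iff $G$ is properly $k$-colorable. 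I would present whichever of (a), (b) is shorter; in both the same idea reappears --- a rigid cyclic structure with no stable coloring --- now phrased in terms of colors forced by cliques rather than by out-degree-$1$ vertices.

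The main obstacle is the clause gadget and, especially, its $k \ge 3$ analogue: it must be engineered so that (i) its only stable colorings are those dictated by the literal values, with no ``accidental'' stable coloring that satisfies a clause for free, and (ii) the all-literals-false configuration genuinely destroys every \emph{global} stable coloring, not merely makes one vertex locally unhappy that could be appeased by recoloring elsewhere. Securing both at once demands tight control of out-degrees so that forcing flows only from literals into the gadget and the induced rigidity terminates exactly at an uncolorable subgraph; as with the previous theorem, once the gadgets are fixed the remaining work is a finite but delicate case analysis over their colorings, and for $k \ge 3$ an extra layer of care is needed to certify that the padding (or the clique-based edge gadget) behaves as intended.
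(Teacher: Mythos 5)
Your overall strategy is viable, and your option (a) for $k \ge 3$ is essentially the paper's argument: add a forced adjacent pair $x,y$ plus many copies of $K_{k-2}$ so that every vertex of the $k=2$ construction has so many out-neighbors of each color in $\left\{3,\dots,k\right\}$ that only colors $1$ and $2$ can ever be best responses. But for $k=2$ there is a genuine gap exactly where you yourself locate ``the main obstacle'': the clause gadget is never constructed, and the version you sketch fails. If a cycle vertex $a$ has out-degree $2$ (its cycle successor $b$ plus a single bypass edge to a literal vertex $\ell$), then with two colors $a$ is stable as soon as at least one of $b,\ell$ differs from it; in particular $a$ can always stabilize itself by taking the color opposite to $\ell$, whatever $\ell$'s truth value, so the directed odd cycle is defused by every literal, satisfied or not. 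To make the idea work, the bypass must present the cycle vertex with a color multiset that is balanced when the literal is false (so the cycle constraint stays binding) and imbalanced when it is true (so the vertex has a weakly dominant color independent of its successor) --- for instance out-edges to the literal vertex and to an auxiliary vertex pinned to the ``true'' color by an out-degree-one chain. Relatedly, reading truth off ``the color of a reference vertex'' must be done in a swap-invariant way (agreement with the reference, not an absolute color), since the game is symmetric under permuting the two colors and the complement of a satisfying assignment need not satisfy $\varphi$. None of this is filled in, and it is the heart of the reduction.

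For comparison, the paper avoids clause gadgets entirely for $k=2$: it reduces from the \emph{balanced} unfriendly partition problem (NP-complete by Bazgan et al.), adding two hub vertices $u,v$ with out-edges to each other and to all of $G$, a vertex $w$ with the single edge $(w,v)$, and one directed $3$-cycle attached to $u$ and $w$. A balanced unfriendly partition lets $u$ and $v$ take opposite colors, which via $w$ hands the $3$-cycle the differently colored out-neighbors it needs to stabilize; if no balanced unfriendly partition exists, the majority color forces $u$ and $v$ to agree, $w$ to disagree, and the $3$-cycle to be unstable. That route is considerably shorter than a 3-SAT reduction, though yours, if the clause gadget were completed as above, would be self-contained and would not lean on the balanced-partition hardness result.
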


\begin{proof}
This problem is clearly in NP.  We again separate the hardness analysis into two
parts: $k=2$ and $k \geq 3$. 

\noindent \emph{1)} $k=2$:
We reduce from the balanced unfriendly partition problem. A balanced 2-partition
of an undirected graph is called unfriendly if each vertex has at least as many
neighbors outside its part as within.  Bazgan et al. proved that the decision
problem for balanced unfriendly partitions is NP-complete~\cite{BazganTV10}.
Given an undirected graph $G$ as an instance of balanced unfriendly partition,
we construct a directed graph $G'$ as follows.

Start by giving $G'$ the same vertex set as $G$, and replace each undirected
edge of $G$ with a pair of directed edges in $G'$. Add two vertices $u,v$ to
$G'$, each with edges to the other and to all other vertices in $G'$. Add an
additional vertex $w$ with an edge $(w,v)$, and connect one vertex of a directed
3-cycle to $u$ and to $w$, as shown in Figure~\ref{fig:weaktwocolornphard}.

\begin{figure}[htb]
\centering
\scalebox{0.4}{\includegraphics{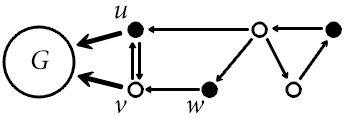}}
\caption{The construction from balanced unfriendly partition to directed stable
2-coloring. Here $u$ and $v$ ``stabilize'' the 3-cycle. A bold arrow denotes a
complete incidence from the source to the target.}
\label{fig:weaktwocolornphard}
\end{figure}

An unbalanced unfriendly partition of $G$ corresponds to a two-coloring of $G$
in which the colors occur equally often. Partially coloring $G'$ in this way, we
can achieve stability by coloring $u,v$ opposite colors, coloring $w$ the same
color as $u$, and using this to stabilize the 3-cycle, as shown in
Figure~\ref{fig:weaktwocolornphard}. Conversely, suppose $G$ does not have a
balanced unfriendly partition and fix a stable 2-coloring of $G'$. WLOG suppose
$G$ has an even number of vertices and suppose color 1 occurs more often among
the vertices coming from $G$. Then $u,v$ must both have color 2, and hence $w$
has color 1. Since $u,w$ have different colors, the 3-cycle will not be stable.
This completes the reduction.

\noindent \emph{2)} $k\ge3$:
We reduce from the case of $k=2$. The idea is to augment the construction $G'$
above by disallowing all but two colors to be used in the $G'$ part. We call the
larger construction $G''$. 

We start with $G'' = G'$ add two new vertices $x,y$ to $G''$ which are adjacent
to each other. In a stable coloring, $x$ and $y$ will necessarily have different
colors (in our construction they will not be the tail of any other edges). We
call these colors 1 and 2, and will force them to be used in coloring $G'$.
Specifically, let $n$ be the number of vertices of $G'$, and construct $n^3$
copies of $K_{k-2}$. For each vertex $v$ in any copy of $K_{k-2}$, add the edges
$(v,x), (v,y)$. Finally, add all edges $(a,b)$ where $a \in G'$ and $b$ comes
from a copy of $K_{k-2}$. Figure~\ref{weakkcolornphard} shows this construction. 

\begin{figure}[htb]
\centering
\scalebox{0.4}{\includegraphics{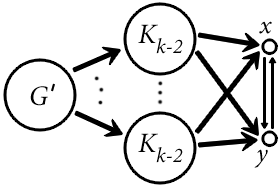}}
\caption{Reducing $k$ colors to two colors. A bold arrow indicates complete
incidence from the source subgraph to the target subgraph.}
\label{weakkcolornphard}
\end{figure}

Now in a stable coloring any vertex from a copy of $K_{k-2}$ must use a
different color than both $x,y$, and the vertex set of a copy of $K_{k-2}$ must
use all possible remaining $k-2$ colors. By being connected to $n^3$ copies of
$K_{k-2}$, each $a \in G'$ will have exactly $n^3$ neighbors of each of the
$k-2$ colors. Even if $a$ were connected to all other vertices in $G'$ and they
all use color 1, it is still better to use color 1 than to use any of the colors
in $\left \{ 3, \dots, k \right \}$. The same holds for color 2, and hence we
force the vertices of $G'$ to use only colors 1 and 2.  \hfill $\square$
\end{proof}

\section{Discussion and open problems}

In this paper we defined new notions of graph coloring.  Our results elucidated
anti-coordination behavior, and solved some open problems in related areas.

Many interesting questions remain.  For instance, one can consider alternative
payoff functions. For players choosing colors $i$ and $j$, the payoff $|i-j|$ is
related to the \emph{channel assignment problem}~\cite{vandenHeuvel98}.  In the
cases when the coloring problem is hard, as in our problem and the example
above, we can find classes of graphs in which it is feasible, or study random
graphs in which we conjecture colorings should be possible to find.  Another
variant is to study weighted graphs, perhaps with weights, as distances,
satisfying a Euclidian metric.

\subsubsection*{Acknowledgements}
We thank Gy\"{o}rgy Tur\'{a}n for helpful discussions.

\bibliographystyle{plain}
\bibliography{paper}

\end{document}